\newcommand{\braket}[2]{#1^\dagger #2}
\newcommand{\myf}{f}
\newcommand{\myF}{\bm{f}}
\newcommand{\mypsi}{\psi}
\newcommand{\myPsi}{\bm{\psi}}
\newcommand{\myT}{\mathcal{T}}
\newcommand{\Pe}{\mathsf{P}_{\text{e}}}
\newtheorem{theorem}{Theorem}
\newtheorem{lemma}{Lemma}
\newtheorem{remark}{Remark}
\begin{document}

\sloppy

\title{An Elias Bound on the Bhattacharyya Distance of Codes for Channels with a Zero-Error Capacity}

\author{
  \IEEEauthorblockN{Marco Dalai}
  \IEEEauthorblockA{Department of Information Engineering\\
    University of Brescia - Italy\\
    Email: marco.dalai@unibs.it} 
}



\maketitle

\begin{abstract}
In this paper, we propose an upper bound on the minimum Bhattacharyya distance of codes for channels with a zero-error capacity.
The bound is obtained by combining an extension of the Elias bound  introduced by  Blahut, with an extension of a bound previously introduced by the author, which builds upon ideas of Gallager, Lov\'asz and Marton. 

\end{abstract}

\section{Introduction}
\label{sec:Intro}
An intriguing problem in the study of discrete memoryless channels (DMC) is that of determining the asymptotic behavior of the probability of error $\Pe$ of optimal codes in the low rate region. In the most general case, the probability of error is precisely zero at rates below the so called zero-error capacity $C_0$, while for $R>C_0$ it is known to be an exponential function of the block-length $n$, i.e.
\begin{equation*}
\Pe\approx e^{-n E(R)},
\end{equation*}
where $E(R)$ is the so called reliability function of the channel. Both $C_0$ and $E(R)$ in the proximity of $C_0$ are unknown in the general case. The most effective upper bounds to $C_0$ and to $E(R)$ were developed independently and there is not yet a good unified and consistent upper bound to both quantities.

In recent works by this author, a possible approach for unifying bounds to $C_0$ and $E(R)$ was suggested which attempts at bounding the Bhattacharyya minimum distance of codes at rates $R>\vartheta$, where $\vartheta$ is Lov\'asz' upper bound to $C_0$ \cite{lovasz-1979}. However, the bounds derived in \cite{dalai-TIT-2013, dalai-ISIT-2013b} are rather crude and there seems to be room for great improvements. For example, when used for the binary symmetric channel, the bound in \cite{dalai-ISIT-2013b} gives essentially the simple Plotkin bound for the zero-rate minimum distance of codes. A useful progress with respect to \cite{dalai-ISIT-2013b} would be a refinement of the ideas to obtain a bound which is both valid in the case of zero-error capacity and not as bad in the case of no zero-error capacity.

In this paper, we make a first step in this direction by proposing an evolution of the idea presented in \cite{dalai-ISIT-2013b} to bound the Bhattacharyya minimum distance of codes on channels with a zero-error capacity. The obtained bound can be interpreted as an extension of the Elias bound to this setting, and it is based on a combination of ideas introduced by Elias, Blahut \cite{blahut-1977}, Gallager \cite{gallager-1965}, Lov\'asz \cite{lovasz-1979} and Marton \cite{marton-1993}.

\section{Elias Bounds}
Generalizations of the Elias bound to non-binary channels have already appeared in the literature. The main contributions in this direction are those of Berlekamp \cite[Ch. 13]{berlekamp-book-1984}, Blahut \cite{blahut-1977}, and Piret \cite{piret-1986}. All those extensions are based on some notion of distance $d(x,x')$ between symbols $x$, $x'$, and distance $d(\bm{x},\bm{x}')$ between codewords $\bm{x}$ and $\bm{x}'$, and follow a scheme based on two steps. For a given code, one first identifies a subset $\myT$ of codewords which are all \emph{packed} in a ball around a properly chosen fixed sequence $\bar{\bm{x}}$. Then, the Plotkin bound is used to bound the minimum distance of the code in terms of the average distance between pairs of distinct codewords in $\myT$ as
\begin{equation}
d_{\text{min}} \leq \frac{1}{|\myT|(|\myT|-1)}\sum_{\bm{x},\bm{x}'\in \myT} d(\bm{x},\bm{x}').
\label{eq:plotkin}
\end{equation}
An important point in this scheme is that the distance used for sequences must be based on the additive application of the distance $d(x,x')$ between symbols, which means that for sequences $\bm{x}=(x_1,\ldots,x_n)$ and $\bm{x}'=(x'_1,\ldots,x'_n)$ we have
\[
d(\bm{x},\bm{x}')=\sum_{i=1}^n d(x_i,x'_i).
\]
This allows one to rewrite the average in eq. \eqref{eq:plotkin} in terms of the componentwise distances as
\begin{align}
\sum_{\bm{x},\bm{x}'\in \myT} d(\bm{x},\bm{x}') & =\sum_{\bm{x},\bm{x}'\in \myT}\sum_i d(x_i,x'_i)\\
& = \sum_i \left(\sum_{\bm{x},\bm{x}'\in \myT}d(x_i,x'_i)\right).
\label{eq:component_plotkin}
\end{align}
Then, the constraints on the compositions of the sequences $\bm{x}, \bm{x}'$ (and possibly $\bar{\bm{x}}$) are used to derive the final bound in slightly different ways in the different approaches \cite{berlekamp-book-1984}, \cite{blahut-1977}, \cite{piret-1986}.

When one considers the case of general DMCs, the first problem is that of choosing a meaningful distance between sequences and symbols. If we are interested in understanding the probability of error of optimal codes, a reasonable approach is to consider the use of the Bhattacharyya distance between symbols\footnote{We point out that bounds on the Bhattacharyya distance of codes can be immediately mapped to bounds on the reliability function for certain symmetric channels, for example for pairwise reversible channels in the sense of \cite{shannon-gallager-berlekamp-1967-2} and (with obvious redefinitions of quantities) for classical-quantum pure-state channels. Due to space limitation, we leave the discussion of such applications of our bound to future work. 
}
\begin{equation}
d(x,x')=-\log \sum_y \sqrt{W(y|x)W(y|x')},
\end{equation}
where $W(y|x)$ is the channel transition probability from input $x$ to output $y$.
Of the three mentioned generalizations, the only one which considers the case of general DMCs is that of Blahut, which actually studies the minimum Bhattacharyya distance of codes.

Blahut focuses on a subset of channels previously studied by Jelinek \cite{jelinek-1968} and he considers the case of no zero-error capacity. There is a strong technical reason for this choice. In fact, if the channel has a zero-error capacity, the Bhattacharyya distance $d(x,x')$ is infinite for some pairs of inputs $x$, $x'$. So, optimal codes will in general contain pairs of codewords $\bm{x},\bm{x}'$ with infinite distance, and any attempt to use the Plotking averaging procedure of equation \eqref{eq:plotkin} fails, since it gives the trivial bound $d_{\text{min}}\leq \infty$.

In this paper, we propose an extension of the Elias bound for channels with a zero-error capacity by considering a variation of the Plotkin step. In a nutshell, since infinite distances arise from the use of the logarithm, we get rid of the logarithm or, equivalently, rather than averaging the pairwise distances $d(\bm{x},\bm{x}')$, we average an exponential function of those distances.
In particular, we use an approach which in a sense corresponds to substituting equation \eqref{eq:plotkin} with 
\begin{equation}
d_{\text{min}} \leq -\rho \log\left( \max_{\bm{x}\in \myT}\frac{1}{(|\myT|-1)}\sum_{\bm{x}'\in \myT\backslash\{\bm{x}\}} e^{-d(\bm{x},\bm{x}')/\rho}\right).
\label{eq:exp_plotkin}
\end{equation}
There is a drawback of course, in that the derivation of the bound must now follow a different route, since it is no longer possible to use eq. \eqref{eq:component_plotkin}. We approach the problem by proposing an extension of the umbrella bound originally introduced in \cite{dalai-ISIT-2013b}. That bound can in fact be interpreted as a variation of the Plotkin bound \eqref{eq:plotkin} in the form of equation \eqref{eq:exp_plotkin}, when there is no constraint on the composition of the codewords $\bm{x}, \bm{x}'$. Here, we propose an extension of the method that allows us to handle composition constraints as is usually done with equation \eqref{eq:component_plotkin}.

In the next section, we introduce the notation and report the basic result of \cite{dalai-ISIT-2013b} for the reader's convenience. We then propose a way to deal with composition constraints and present the associated generalization of the Elias bound. We finally discuss how this bound relates to previously known ones.

\section{$\vartheta(\rho)$ and the basic Umbrella bound}

Let $W(y|x)$, $x\in \mathcal{X}$, $y\in\mathcal{Y}$, be the transition probabilities of a discrete memoryless channel $W$ with input alphabet 
$\mathcal{X}$ and output alphabet $\mathcal{Y}$. For a sequence $\bm{x}=(x_1,x_2,\ldots,x_n)\in\mathcal{X}^n$ and a sequence $\bm{y}=(y_1,y_2,\ldots,y_n)\in\mathcal{Y}^n$, the probability of observing $\bm{y}$ at the output of the channel given $\bm{x}$ at the input is 
\begin{equation}
W^{(n)}(\bm{y}|\bm{x})=\prod_{i=1}^n W(y_i|x_i).
\end{equation}

For a generic input symbol $x$, consider the unit norm $|\mathcal{Y}|$-dimensional column ``state'' vector $\mypsi_x$ with components $\mypsi_x(y)=\sqrt{W(y|x)}$. In the same way, for an input sequence $\bm{x}=(x_1,x_2,\ldots,x_n)$, consider the unit norm $|\mathcal{Y}|^n$-dimensional column vector $\myPsi_{\bm{x}}$ whose components are the values
$\sqrt{W^{(n)}(\bm{y}|\bm{x})}$. Then, since the channel is memoryless, we can write
\begin{equation}
\myPsi_{\bm{x}}=\mypsi_{x_1}\otimes\mypsi_{x_2}\otimes\cdots\mypsi_{x_n}
\label{eq:defPsi}
\end{equation}
where $\otimes$ is the Kronecker product. 



The function $\vartheta(\rho)$ was derived in \cite{dalai-ISIT-2013b} as an extension of the Lov\'asz theta function as follows. Consider the inner products between the channel state vectors $\braket{\mypsi_x}{\mypsi_{x'}}\geq 0$. For a fixed $\rho\geq 1$, an \emph{orthonormal representation of degree $\rho$} of our channel $W$ is a set of ``tilted''   unit norm vectors $\{\tilde{\mypsi}_x\}$ in any Hilbert space such that $|\braket{\tilde{\mypsi}_x}{\tilde{\mypsi}_{x'}}|\leq (\braket{{\mypsi}_x}{{\mypsi}_{x'}})^{1/\rho}$. Call $\Gamma(\rho)$ the non-empty set of all possible such representations
\begin{equation}
\Gamma(\rho) = \left\{ \{\tilde{\mypsi}_x\} \, :\,  |\braket{\tilde{\mypsi}_x}{\tilde{\mypsi}_{x'}}|\leq (\braket{{\mypsi}_x}{{\mypsi}_{x'}})^{1/\rho}\right\}, 	\quad \rho\geq 1.
\end{equation}
The \emph{value} of an orthonormal representation is the quantity 
\begin{equation}
V(\{\tilde{\mypsi}_x\})=\min_{\myf}\max_x\log \frac{1}{|\braket{\tilde{\mypsi}_x}{\myf}|^2},
\end{equation}
where the minimum is over all unit norm vectors $\myf$. The optimal choice of the vector $\myf$ is called the \emph{handle} of the representation. The function $\vartheta(\rho)$ is defined as the minimum value over all representations of degree $\rho$, that is,
\begin{align}
\vartheta(\rho) & = \min_{\{\tilde{\mypsi}_x\} \in \Gamma(\rho)}V(\{\tilde{\mypsi}_x\}).
\end{align}


The function $\vartheta(\rho)$ was used in \cite{dalai-ISIT-2013b} to derive (a stronger form of) the following bound.
\begin{theorem}
For any code of block-length $n$ with $M$ codewords and any $\rho\geq 1$, we have
\begin{equation*}
\max_{m'\neq m}\braket{\bm{\psi}_m}{\bm{\psi}_{m'}}\geq \left(\frac{ M e^{-n\vartheta(\rho)} -1}{M-1}\right)^\rho.
\end{equation*}
\label{th:genplotkin}
\end{theorem}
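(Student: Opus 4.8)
The plan is to lift an optimal single-letter orthonormal representation to the $n$-fold product channel by tensoring, and then to run a Cauchy--Schwarz averaging argument of Lov\'asz type against the (tensored) handle. The key structural fact I would exploit is that inner products factor over the Kronecker product, so that the single-letter quantity $\vartheta(\rho)$ controls the $n$-letter problem multiplicatively.

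First I would fix a representation $\{\tilde{\mypsi}_x\}\in\Gamma(\rho)$ together with its handle $\myf$ attaining (or approaching) the value $\vartheta(\rho)$, so that $|\braket{\tilde{\mypsi}_x}{\myf}|^2\ge e^{-\vartheta(\rho)}$ for every $x$. For each codeword $\bm{x}_m=(x_1,\ldots,x_n)$, set $\tilde{\bm{\psi}}_m=\tilde{\mypsi}_{x_1}\otimes\cdots\otimes\tilde{\mypsi}_{x_n}$ and $\myF=\myf^{\otimes n}$. Since $\braket{\bm{\psi}_m}{\bm{\psi}_{m'}}=\prod_i \braket{\mypsi_{x_i}}{\mypsi_{x'_i}}$ by \eqref{eq:defPsi}, the factorization $|\braket{\tilde{\bm{\psi}}_m}{\tilde{\bm{\psi}}_{m'}}|=\prod_i|\braket{\tilde{\mypsi}_{x_i}}{\tilde{\mypsi}_{x'_i}}|\le\prod_i(\braket{\mypsi_{x_i}}{\mypsi_{x'_i}})^{1/\rho}=(\braket{\bm{\psi}_m}{\bm{\psi}_{m'}})^{1/\rho}$ shows that $\{\tilde{\bm{\psi}}_m\}$ is a degree-$\rho$ representation of the product channel; the same factorization gives $|\braket{\tilde{\bm{\psi}}_m}{\myF}|^2=\prod_i|\braket{\tilde{\mypsi}_{x_i}}{\myf}|^2\ge e^{-n\vartheta(\rho)}$ for every $m$.

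Next comes the averaging step. Writing $\gamma=e^{-n\vartheta(\rho)}$, I would multiply each $\tilde{\bm{\psi}}_m$ by a unit-modulus phase so that $\braket{\tilde{\bm{\psi}}_m}{\myF}$ is real and nonnegative, hence $\ge\sqrt{\gamma}$; this leaves the moduli $|\braket{\tilde{\bm{\psi}}_m}{\tilde{\bm{\psi}}_{m'}}|$ and all norms unchanged. Forming $\bm{g}=\sum_m\tilde{\bm{\psi}}_m$, Cauchy--Schwarz against the unit vector $\myF$ gives $\|\bm{g}\|^2\ge|\braket{\bm{g}}{\myF}|^2=\big(\sum_m\braket{\tilde{\bm{\psi}}_m}{\myF}\big)^2\ge M^2\gamma$. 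On the other hand $\|\bm{g}\|^2=M+\sum_{m\neq m'}\braket{\tilde{\bm{\psi}}_m}{\tilde{\bm{\psi}}_{m'}}$, where the off-diagonal sum is real (imaginary parts cancel by Hermitian symmetry) and therefore at most $M(M-1)\max_{m\neq m'}|\braket{\tilde{\bm{\psi}}_m}{\tilde{\bm{\psi}}_{m'}}|$. Combining the two estimates and solving yields
\begin{equation*}
\max_{m\neq m'}|\braket{\tilde{\bm{\psi}}_m}{\tilde{\bm{\psi}}_{m'}}|\ge\frac{M\gamma-1}{M-1}.
\end{equation*}

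Finally I would translate this back to the channel vectors: the degree-$\rho$ constraint gives $\braket{\bm{\psi}_m}{\bm{\psi}_{m'}}\ge|\braket{\tilde{\bm{\psi}}_m}{\tilde{\bm{\psi}}_{m'}}|^\rho$, and since $t\mapsto t^\rho$ is nondecreasing on $[0,\infty)$ for $\rho\ge 1$, evaluating at the pair achieving the maximum above gives the claimed $\max_{m'\neq m}\braket{\bm{\psi}_m}{\bm{\psi}_{m'}}\ge\big((M\gamma-1)/(M-1)\big)^\rho$, the bound being informative precisely when $M\gamma>1$. The step I expect to require the most care is the tensorization: one must verify both that the product vectors remain a legitimate degree-$\rho$ representation and that the handle composes so that its value adds up to $n\vartheta(\rho)$; a secondary delicate point is the phase alignment, which is what lets the complex inner products behave like the real nonnegative quantities the Cauchy--Schwarz bound needs. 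If the minimum defining $\vartheta(\rho)$ is not attained, I would run the argument with a representation of value $\vartheta(\rho)+\epsilon$ and let $\epsilon\to0$.
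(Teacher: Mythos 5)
Your proof is correct, and its outer skeleton is exactly the paper's: tensorize an optimal (or $\epsilon$-optimal) degree-$\rho$ representation and its handle, use the product structure to get $|\braket{\tilde{\myPsi}_{\bm{x}}}{\myF}|^2\geq e^{-n\vartheta(\rho)}$, and pass back to the channel vectors via $\braket{\myPsi_{\bm{x}}}{\myPsi_{\bm{x}'}}\geq|\braket{\tilde{\myPsi}_{\bm{x}}}{\tilde{\myPsi}_{\bm{x}'}}|^\rho$ together with monotonicity of $t\mapsto t^\rho$. Where you genuinely diverge is the averaging step, which the paper isolates as Lemma \ref{le:spherical} and proves spectrally: with $\Phi$ the matrix of columns $v_i$, it deduces $\lambda_{\text{max}}(\Phi^\dagger\Phi)=\lambda_{\text{max}}(\Phi\Phi^\dagger)\geq Mc$ from $\|\Phi^\dagger f\|^2\geq Mc$, then closes with the row-sum eigenvalue bound $\lambda_{\max}(A)\leq\max_i\sum_j|A_{i,j}|\leq 1+(M-1)\max_{i\neq j}|\braket{v_i}{v_j}|$. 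You instead phase-align the tilted codeword vectors and apply Cauchy--Schwarz to $\bm{g}=\sum_m\tilde{\bm{\psi}}_m$, comparing $\|\bm{g}\|^2\geq M^2\gamma$ against the Gram expansion $M+\sum_{m\neq m'}\braket{\tilde{\bm{\psi}}_m}{\tilde{\bm{\psi}}_{m'}}$; both routes land on the identical inequality $\max_{i\neq j}|\braket{v_i}{v_j}|\geq(Mc-1)/(M-1)$. Your version is more elementary --- it needs no facts about eigenvalues --- but the phase alignment is genuinely load-bearing (without it the terms $\braket{\tilde{\bm{\psi}}_m}{\myF}$ could cancel and the Cauchy--Schwarz lower bound would fail), whereas the spectral argument works directly with the moduli $|A_{i,j}|$ and never has to touch phases; it is also marginally more robust, since it only uses the average $\frac{1}{M}\sum_i|\braket{v_i}{f}|^2\geq c$ rather than a per-vector bound. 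Your closing hedges --- running the argument at value $\vartheta(\rho)+\epsilon$ if the minimum is not attained, and noting the bound is informative only when $Me^{-n\vartheta(\rho)}>1$ --- are sound and consistent with the paper's implicit reading of the statement.
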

We observe that the proof of Theorem \ref{th:genplotkin} is essentially based on the following result that we prove here for convenience.
\begin{lemma}
Let $v_1,\ldots,v_M$ and $f$ be unit norm vectors such that $|\braket{v_i}{f}|^2\geq c>0$ for all $i$. Then 
\[
\max_{i\neq j} |\braket{v_i}{v_j}|\geq \frac{Mc-1}{M-1}
\]
\label{le:spherical}
\end{lemma}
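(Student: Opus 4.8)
The plan is to run a second-moment (Plotkin-type) argument on the resultant vector $S=\sum_{i=1}^M v_i$, bounding $\|S\|^2$ from below through the common handle $f$ and from above through the pairwise inner products. The lower bound exploits the hypothesis that every $v_i$ has a large overlap with $f$, so that the $v_i$ must pile up in roughly the same direction; the upper bound is just the triangle-type expansion of $\|S\|^2$. Equating the two and solving for $\max_{i\neq j}|\braket{v_i}{v_j}|$ will give exactly $(Mc-1)/(M-1)$.

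First I would normalize the phases. Since both $|\braket{v_i}{f}|$ and $|\braket{v_i}{v_j}|$ are unchanged when $v_i$ is replaced by $e^{\mathrm{i}\theta_i}v_i$, I may assume without loss of generality that each $\braket{v_i}{f}$ is real and nonnegative, hence $\braket{v_i}{f}\geq\sqrt{c}$. With this normalization, projecting $S$ onto the unit vector $f$ and using Cauchy--Schwarz gives
\begin{equation*}
\|S\|^2 \;\geq\; \bigl|\,f^\dagger S\,\bigr|^2 \;=\; \Bigl(\sum_{i=1}^M \braket{v_i}{f}\Bigr)^{2} \;\geq\; M^2 c .
\end{equation*}

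For the upper bound I would expand
\begin{equation*}
\|S\|^2 \;=\; \sum_{i} \|v_i\|^2 + \sum_{i\neq j}\braket{v_i}{v_j} \;=\; M + \sum_{i\neq j}\Re\,\braket{v_i}{v_j},
\end{equation*}
where the last equality holds because $\|S\|^2$ is real and $\braket{v_j}{v_i}=\overline{\braket{v_i}{v_j}}$. Bounding each term by $\Re\,\braket{v_i}{v_j}\leq|\braket{v_i}{v_j}|\leq\max_{k\neq l}|\braket{v_k}{v_l}|$ yields $\|S\|^2 \leq M + M(M-1)\max_{i\neq j}|\braket{v_i}{v_j}|$. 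Combining this with the lower bound $M^2 c$ and rearranging gives the claim. The only genuinely delicate point is that the inner products $\braket{v_i}{v_j}$ may be complex, so a naive attempt to align everything could fail; the argument is saved by the observation that only the \emph{real part} of $\braket{v_i}{v_j}$ enters $\|S\|^2$, and this real part is always dominated by the modulus $|\braket{v_i}{v_j}|$. The phase normalization is thus needed only to make the lower bound tight, not the upper one, which is what makes the two bounds meet at the stated value.
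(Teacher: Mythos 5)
Your argument is correct, and it takes a genuinely different route from the paper's. The paper proceeds spectrally: with $\Phi$ the matrix whose $i$-th column is $v_i$, the hypothesis gives $f^\dagger\Phi\Phi^\dagger f=\sum_i|\braket{v_i}{f}|^2\geq Mc$, hence $\lambda_{\max}(\Phi\Phi^\dagger)\geq Mc$, hence $\lambda_{\max}(\Phi^\dagger\Phi)\geq Mc$, and the row-sum bound $\lambda_{\max}(A)\leq\max_i\sum_j|A_{i,j}|$ applied to the Gram matrix $A=\Phi^\dagger\Phi$ yields $Mc\leq 1+(M-1)\max_{i\neq j}|\braket{v_i}{v_j}|$. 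You instead run a second-moment argument on the resultant $S=\sum_i v_i$: after the legitimate phase normalization $v_i\mapsto e^{\mathrm{i}\theta_i}v_i$ (which changes neither $|\braket{v_i}{f}|$ nor $|\braket{v_i}{v_j}|$), Cauchy--Schwarz gives $\|S\|^2\geq|f^\dagger S|^2\geq M^2c$, while expanding $\|S\|^2$ and using $\Re\,\braket{v_i}{v_j}\leq|\braket{v_i}{v_j}|$ gives at most $M+M(M-1)\max_{i\neq j}|\braket{v_i}{v_j}|$; dividing by $M$ recovers exactly the paper's final inequality. In matrix language, you evaluate the quadratic form of $\Phi^\dagger\Phi$ at the explicit unit test vector of conjugate phases, where the paper bounds it at the top eigenvector; correspondingly your lower bound uses $\bigl(\sum_i|\braket{v_i}{f}|\bigr)^2\geq M^2c$ where the paper uses $\sum_i|\braket{v_i}{f}|^2\geq Mc$, both immediate from the uniform hypothesis. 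What your route buys is elementarity: no eigenvalue facts, no transfer between $\Phi\Phi^\dagger$ and $\Phi^\dagger\Phi$, no Gershgorin-type bound --- only Cauchy--Schwarz and $\Re z\leq|z|$ --- and you correctly isolate the one delicate point (complex phases could cancel in $f^\dagger S$) and dispose of it by normalization, rightly noting the upper bound needs none. What the paper's spectral route buys is that it never touches phases at all, since it only ever uses $|\braket{v_i}{f}|^2$ and absolute row sums, and it compresses to two lines once the standard eigenvalue facts are quoted.
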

\begin{proof}
Let $\Phi$ be a matrix whose $i$-th column is $v_i$. Then, direct computation shows that 
\[
\braket{f}{\Phi}\braket{\Phi}{f}\geq Mc.
\]
Since $f$ is a unit norm vector, $\lambda_{\text{max}}(\Phi \Phi^\dagger)\geq Mc$, where $\lambda_{\text{max}}$ is the largest eigenvalue. This also implies $\lambda_{\text{max}}(\Phi^\dagger\Phi)\geq Mc$. For a matrix $A$ with elements $A_{i,j}$, it is known that 
\begin{equation}
\lambda_{\max}(A)\leq \max_{i}\sum_j|A_{i,j}|.
\end{equation}
Applying this to $A=\Phi^\dagger \Phi$ we obtain
\begin{align*}
Mc & \leq \lambda_{\text{max}}(\Phi^\dagger\Phi)\\
& \leq \max_{i}\sum_j |\braket{v_i}{v_j}|\\
& \leq 1+(M-1)\max_{i\neq j} |\braket{v_i}{v_j}|
\end{align*}
which implies the statement of the lemma.
\end{proof}

Theorem \ref{th:genplotkin} now follows by observing that, setting $\myF=\myf^{\otimes n}$, for any sequence $ \bm{x}=(x_1\ldots,x_n)$ we have 
\begin{align}
|\braket{\tilde{\myPsi}_{\bm{x}}}{\myF}|^2 & =\prod_{i=1}^n | \braket{\tilde{\mypsi}_{x_i}}{\myf} |^2\\ 
& \geq e^{-n\vartheta(\rho)}
\label{eq:comp_inner}
\end{align}
and, for any two sequences $\bm{x},\bm{x}'$, $\braket{\myPsi_{\bm{x}}}{\myPsi_{\bm{x}'}}\geq |\braket{\tilde{\myPsi}_{\bm{x}}}{\tilde{\myPsi}_{\bm{x}'}}|^\rho$.

\section{Extension of the Bound}
\subsection{Constant Composition Codes}
The first step that we need to consider, for the development of a bound along the Elias scheme, is the extension of the umbrella bound to codes with a constant composition.
For the bound derived in the previous section, the main property of the function $\vartheta(\rho)$ that we used is the property expressed in \eqref{eq:comp_inner}.
There we see the reason for the definition of $\vartheta(\rho)$. We built a set of vectors $\{\tilde{\mypsi}_{x}\}$ associated to symbols, and a vector $\myf$ such that $\myf$ is ``close'' to all possible $\tilde{\mypsi}_{x}$. If we are interested in sequences with a particular composition, however, it can be preferable to pick $\myf$ so that $|\braket{\tilde{\mypsi}_x}{\myf}|$ is larger for the symbols $x$ which are used more frequently. This leads to a variation of $\vartheta(\rho)$ which is the analogue of the variation of the Lov\'asz theta function introduced by Marton in \cite{marton-1993}.

For a distribution $P$ and for $\rho\geq 1$, we define
\begin{equation}
\vartheta(\rho,P)=\min_{ \{\tilde{\psi}_x\} \in \Gamma(\rho), f}\sum_x P(x)\log\frac{1}{|\braket{\tilde{\mypsi}_x}{\myf}|^2}.
\end{equation}

With this definition, if $\bm{x}$ is a sequence with composition $P$, and $\{\tilde{\mypsi}_x\}$ is a representation with handle $f$ achieving $\vartheta(\rho,P)$, we have
\begin{eqnarray}
|\braket{\tilde{\myPsi}_{\bm{x}}}{\myF} |^2 & = & \prod_{i=1}^n | \braket{\tilde{\mypsi}_{x_i}}{\myf}|^2\\
& =& \prod_{x}|\braket{\tilde{\mypsi}_{x}}{\myf}|^{2 n P(x)}\\
& =& e^{n\sum_x P(x)\log |\braket{\tilde{\mypsi}_{x}}{\myf}|^2}\\
& = & e^{-n\vartheta(\rho,P)}.
\end{eqnarray}
Then, assume we have a code with $M$ codewords $\bm{x}_1,\ldots,\bm{x}_M$ of composition $P$. We can apply Lemma \ref{le:spherical} to the vectors $\tilde{\myPsi}_{\bm{x}_i}$ and then the inequality $\braket{\myPsi_{\bm{x}}}{\myPsi_{\bm{x}'}}\geq |\braket{\tilde{\myPsi}_{\bm{x}}}{\tilde{\myPsi}_{\bm{x}'}}|^\rho$ to deduce that
\begin{align}
\max_{m\neq m'}\braket{\myPsi_{\bm{x}_m}}{\myPsi_{{\bm{x}}_{m'}}} & \geq \left( \frac{M e^{-n\vartheta(\rho,P)}-1}{M-1} \right)^\rho\\
& \geq \left( e^{-n\vartheta(\rho,P)}-M^{-1} \right)^\rho.
\end{align}
Now, we see that if $R>\vartheta(\rho,P)$, as $n\to\infty$ the above quantity goes to zero as $e^{-n\rho \vartheta(\rho,P)}$. 

Define then the asymptotic minimum distance
\begin{equation}
d(R,P)=\limsup_{n\to \infty}\max_{\mathcal{C}}\left[-\frac{1}{n}\log \max_{m\neq m'} \braket{{\myPsi}_m}{{\myPsi}_{m'}}\right]
\end{equation}
where the maximum is over all codes of length $n$, rate at least $R$ and compositions tending to $P$ as $n\to\infty$.
We have the following result.

\begin{theorem}
For any $\rho\geq 1$, if $R>\vartheta(\rho,P)$, then $d(R,P)\leq \rho\vartheta(\rho,P)$.
\label{th:const_comp_1}
\end{theorem}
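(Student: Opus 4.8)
The statement follows from the chain of inequalities established just above the theorem, together with an elementary asymptotic analysis of its right-hand side; the only point that needs some care is the fact that the definition of $d(R,P)$ allows codes whose composition merely tends to $P$. The plan is as follows. Fix a sequence of codes $\mathcal{C}_n$ of length $n$, rate at least $R$ (so that $M\ge e^{nR}$), and composition $P_n\to P$. For each $n$ the derivation preceding the theorem yields
\[
\max_{m\neq m'}\braket{\myPsi_{\bm{x}_m}}{\myPsi_{\bm{x}_{m'}}}\ \geq\ \left(e^{-n\vartheta(\rho,P_n)}-M^{-1}\right)^{\rho},
\]
and I would take $-\tfrac1n\log(\cdot)$ of both sides, using that $-\log$ is decreasing in order to turn this lower bound on the overlap into an upper bound on the normalized minimum distance.

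First I would isolate the dominant exponential. Setting $\epsilon_n=M^{-1}e^{n\vartheta(\rho,P_n)}$ and using $M\ge e^{nR}$ gives $\epsilon_n\le e^{-n(R-\vartheta(\rho,P_n))}$, so that
\[
-\frac1n\log\max_{m\neq m'}\braket{\myPsi_{\bm{x}_m}}{\myPsi_{\bm{x}_{m'}}}\ \le\ \rho\,\vartheta(\rho,P_n)-\frac{\rho}{n}\log\!\left(1-\epsilon_n\right).
\]
Provided $\vartheta(\rho,P_n)$ stays bounded below $R$, the quantity $\epsilon_n$ decays exponentially and the last term is $o(1)$ uniformly in the choice of the codes, so the right-hand side equals $\rho\,\vartheta(\rho,P_n)+o(1)$.

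It then remains to pass from $\vartheta(\rho,P_n)$ back to $\vartheta(\rho,P)$. Here I would invoke upper semicontinuity of $P\mapsto\vartheta(\rho,P)$: for each fixed admissible representation $\{\tilde{\mypsi}_x\}\in\Gamma(\rho)$ and handle $\myf$, the objective $\sum_x P(x)\log|\braket{\tilde{\mypsi}_x}{\myf}|^{-2}$ is linear, hence continuous, in $P$, and a pointwise infimum of continuous functions is upper semicontinuous. Consequently $\limsup_n\vartheta(\rho,P_n)\le\vartheta(\rho,P)<R$; in particular $\vartheta(\rho,P_n)<R$ for all large $n$, which both validates the decay of $\epsilon_n$ used above and gives $\limsup_n\rho\,\vartheta(\rho,P_n)\le\rho\,\vartheta(\rho,P)$. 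Taking the $\limsup$ in $n$ of the displayed upper bound and then the supremum over all admissible sequences of codes yields $d(R,P)\le\rho\,\vartheta(\rho,P)$. I expect this semicontinuity step to be the main (and essentially only) delicate ingredient, the remainder being the routine estimate of the correction term $\epsilon_n$.
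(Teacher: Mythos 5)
Your proposal is correct and takes essentially the same route as the paper: the paper's proof of Theorem \ref{th:const_comp_1} is precisely the displayed chain $\max_{m\neq m'}\braket{\myPsi_{\bm{x}_m}}{\myPsi_{\bm{x}_{m'}}}\geq\left(e^{-n\vartheta(\rho,P)}-M^{-1}\right)^{\rho}$, obtained from Lemma \ref{le:spherical} applied to the tilted vectors together with $\braket{\myPsi_{\bm{x}}}{\myPsi_{\bm{x}'}}\geq|\braket{\tilde{\myPsi}_{\bm{x}}}{\tilde{\myPsi}_{\bm{x}'}}|^{\rho}$, followed by the same first-order exponential asymptotics you carry out. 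Your upper-semicontinuity step for compositions $P_n\to P$ fills in a detail the paper passes over silently, and it is sound up to one small caveat: the objective $\sum_x P(x)\log|\braket{\tilde{\mypsi}_x}{\myf}|^{-2}$ can be $+\infty$ (hence not continuous in $P$) when some $\braket{\tilde{\mypsi}_x}{\myf}=0$, which is harmless because near-optimal handles can be perturbed slightly so that all finitely many inner products are nonzero, leaving the infimum unchanged.
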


It is obvious from the definitions that $\vartheta(\rho,P)\leq \vartheta(\rho)$. Hence, even after optimization of the distribution $P$, the bound derived here is at least as good as the one that we can derive from Theorem \ref{th:genplotkin}. The variation introduced here is however also useful in the case of cost constraints.

\subsection{The Elias Bound}

We now extend further the definition of $\vartheta$ in order to apply the scheme developed by Blahut as a generalization of the Elias bound. What we need now is to extend the definition of $\vartheta(\rho,P)$ to deal with stochastic matrices.
Given a distribution $P$ and a $|\mathcal{X}|\times |\mathcal{X}|$ stochastic matrix $V(x'|x)$, we define
\begin{align}
\vartheta(\rho,P,V) & = \sum_x P(x) \vartheta(\rho, V(\cdot|x))\label{eq:defthetarhoPV}\\
& =\min\sum_{x,x'} P(x)V(x'|x)\log\frac{1}{|\braket{\tilde{\mypsi}_{x,x'}}{\myf_x}|^2}
\end{align}
where the minimum is over all \emph{sequences} of representations $\{\mypsi_{x, 1},\ldots,\mypsi_{x,|\mathcal{X}|}\}\in\Gamma(\rho)$, $x\in\mathcal{X}$ (one representation for each $x$) and over all sets of unit norm vectors $\{f_x\}$, $x\in\mathcal{X}$ (a different handle for each $x$). 

Consider now the set of optimal representations and optimal handles which achieve $\vartheta(\rho,P,V)$. Let $\bm{x}=(x_1,x_2,\ldots,x_n)$ be a sequence with composition $P$ and define
\begin{equation}
\bm{f}=f_{x_1}\otimes f_{x_2}\cdots\otimes f_{x_n}
\end{equation}
For a sequence $x'=(x'_1,x'_2,\ldots,x'_n)$ which has a conditional composition $V$ given the sequence $x$, consider the vector
\begin{equation}
\tilde{\myPsi}_{\bm{x'}}=\tilde{\mypsi}_{x_1, x'_1}\otimes \tilde{\mypsi}_{x_2, x'_2}\cdots\otimes \tilde{\mypsi}_{x_n, x_n'}
\end{equation}
Then, we have
\begin{eqnarray}
|\braket{\tilde{\myPsi}_{\bm{x}'}}{\myF} |^2 & = & \prod_{i=1}^n | \braket{\tilde{\mypsi}_{x_i, x'_i}}{\myf_{x_i}}|^2\\
& =& \prod_{x,x'}|\braket{\tilde{\mypsi}_{x,x'}}{\myf_{x}}|^{2 n P(x)V(x'|x)}\\
& =& e^{n\sum_{x,x'} P(x)V(x'|x)\log|\braket{\tilde{\mypsi}_{x,x'}}{\myf_x}|^2}\\
& = & e^{-n\vartheta(\rho,P,V)}.
\end{eqnarray}
Proceeding as we did in our previous bounds, if we have a set of $M$ codewords all with a conditional composition $V$ from a fixed sequence $\bm{x}$ with composition $P$, then 
\begin{equation}
\max_{m\neq m'}\braket{\myPsi_{\bm{x}_m}}{\myPsi_{{\bm{x}}_{m'}}}\geq \left( \frac{M e^{-n\vartheta(\rho,P,V)}-1}{M-1} \right)^\rho.
\label{eq:EBD_bound}
\end{equation}

In order to use this inequality for a given code, it is now necessary to consider the possible joint compositions of a subset of codewords with some given fixed sequence $\bar{\bm{x}}$. 
Given a code with $M=e^{nR}$ codewords of composition $P$, for a $\rho\geq 1$ and $\varepsilon>0$, assume that there exists a stochastic matrix $V(x'|x)$ such that $nP(x)V(x'|x)$ is an integer,
\begin{equation}
\sum_{x}P(x)V(x'|x)=P(x')
\end{equation}
(that we will write as $PV=P$), and
\begin{equation}
R\geq I(P,V)+\vartheta(\rho,P,V)+\varepsilon,
\end{equation}
where $I(P,V)$ is the mutual information with the notation of \cite{csiszar-korner-book}.
Then, (see \cite{blahut-1977}, proof of Th. 8) there is at least one sequence $\bar{\bm{x}}$ of composition $P$ (not necessarily a codeword) such that there are at least $T=e^{n(\vartheta(\rho,P,V)+\varepsilon-o(1))}$ codewords with conditional composition $V$ from $\bar{\bm{x}}$.
Let $\mathcal{T}$ be the set of such codewords, which plays the same role as in Section \ref{sec:Intro}. Then, for these codewords we can apply the bound of equation \eqref{eq:EBD_bound} with $T$ in place of $M$. Considering the first order exponent, we then deduce that
\begin{equation}
-\frac{1}{n}\log \max_{m\neq m'} \braket{{\myPsi}_m}{{\myPsi}_{m'}}\leq \rho \vartheta(\rho,P,V) + o(1).
\end{equation}
For fixed $n$, the choice of $V$  is  constrained to satisfy the usual type constraints, but asymptotically as $n\to\infty$ this constraints can be neglected.
As a consequence, we have the following theorem.
\begin{theorem}
For given $R$, $P$ and $\rho\geq 1$, let $V$ be a $|\mathcal{X}|\times |\mathcal{X}|$ stochastic matrix such that $PV=P$. If $R > I(P,V)+\vartheta(\rho,P,V)$, then  $d(R,P)\leq§ \rho \vartheta(\rho,P,V)$.
\label{th:const_comp_2}
\end{theorem}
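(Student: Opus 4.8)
The plan is to realize the two-step Elias scheme of Section~\ref{sec:Intro}, with the classical Plotkin averaging~\eqref{eq:plotkin} replaced by the umbrella inequality~\eqref{eq:EBD_bound}, all of whose ingredients are now in place. I fix an arbitrary code $\mathcal{C}$ of block-length $n$ with $M=e^{nR}$ codewords of composition $P$ (a larger code only strengthens the conclusion, and compositions merely tending to $P$ are absorbed in the $\limsup$ by continuity), together with the matrix $V$ satisfying $PV=P$, restricting $n$ to a subsequence along which $nP(x)V(x'|x)$ is integral. The objective is to bound the single-code functional $-\frac1n\log\max_{m\neq m'}\braket{\myPsi_m}{\myPsi_{m'}}$ by $\rho\,\vartheta(\rho,P,V)+o(1)$ uniformly over $\mathcal{C}$, and then pass to the definition of $d(R,P)$.

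The first step is the packing argument. I would isolate a large subset $\myT$ of codewords clustered, in the sense of conditional composition $V$, around a single reference sequence $\bar{\bm{x}}$. This is the combinatorial heart borrowed from Blahut: counting (center, codeword) incidences of joint composition $P(x)V(x'|x)$, one notes that the sequences $\bar{\bm{x}}$ of composition $P$ number $e^{nH(P)+o(n)}$, while each fixed codeword is compatible with $e^{n(H(P)-I(P,V))+o(n)}$ of them; averaging over the $M$ codewords forces the existence of at least one $\bar{\bm{x}}$ of composition $P$ from which at least $T\geq e^{n(R-I(P,V)-o(1))}$ codewords have conditional composition $V$. The hypothesis $R>I(P,V)+\vartheta(\rho,P,V)$ then lets me set $\varepsilon=\tfrac12\bigl(R-I(P,V)-\vartheta(\rho,P,V)\bigr)>0$, so that $T\geq e^{n(\vartheta(\rho,P,V)+\varepsilon-o(1))}$ and, in particular, $T\,e^{-n\vartheta(\rho,P,V)}\to\infty$.

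The second step is the exponential Plotkin bound on $\myT$. Using the optimal representations and handles $\{f_x\}$ achieving $\vartheta(\rho,P,V)$ in~\eqref{eq:defthetarhoPV}, I build the handle $\myF=f_{x_1}\otimes\cdots\otimes f_{x_n}$ from the symbols of $\bar{\bm{x}}$ and the corresponding tilted vectors $\tilde{\myPsi}_{\bm{x}'}$ for $\bm{x}'\in\myT$; the computation preceding the theorem gives $|\braket{\tilde{\myPsi}_{\bm{x}'}}{\myF}|^2=e^{-n\vartheta(\rho,P,V)}$ for every $\bm{x}'\in\myT$. Applying Lemma~\ref{le:spherical} with $c=e^{-n\vartheta(\rho,P,V)}$ and $T$ in place of $M$, together with $\braket{\myPsi_{\bm{x}}}{\myPsi_{\bm{x}'}}\geq|\braket{\tilde{\myPsi}_{\bm{x}}}{\tilde{\myPsi}_{\bm{x}'}}|^\rho$, reproduces~\eqref{eq:EBD_bound} with $T$ for $M$. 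Since $T\,e^{-n\vartheta(\rho,P,V)}\to\infty$, the bracket converges to $e^{-n\vartheta(\rho,P,V)}$ up to subexponential factors, so $-\frac1n\log\max_{m\neq m'}\braket{\myPsi_m}{\myPsi_{m'}}\leq\rho\,\vartheta(\rho,P,V)+o(1)$. As this holds for every admissible $\mathcal{C}$ and the resulting exponent is free of $\varepsilon$, taking the $\limsup$ over codes and $n$ yields $d(R,P)\leq\rho\,\vartheta(\rho,P,V)$.

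The main obstacle is the packing step: making the incidence double-counting rigorous, controlling all the $o(1)$ and method-of-types error terms, and respecting the integrality constraints $nP(x)V(x'|x)\in\mathbb{Z}$ while verifying that the concentration around a single center costs exactly the factor $e^{-nI(P,V)}$ — this factor being precisely the source of the $I(P,V)$ penalty in the rate hypothesis. A secondary point requiring care is the bookkeeping in the second step: the tilted vectors $\tilde{\myPsi}_{\bm{x}'}$ and the handle $\myF$ must use the \emph{same} coordinatewise assignment of per-symbol representations $\{\tilde{\mypsi}_{x,x'}\}$ and handles $f_x$, dictated by the shared reference $\bar{\bm{x}}$, so that the identity $|\braket{\tilde{\myPsi}_{\bm{x}'}}{\myF}|^2=e^{-n\vartheta(\rho,P,V)}$ holds simultaneously for all $\bm{x}'\in\myT$; this consistency is automatic once $\bar{\bm{x}}$ is fixed.
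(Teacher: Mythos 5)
Your proposal is correct and follows essentially the same route as the paper: the same Blahut-style packing step (the paper simply cites the proof of Blahut's Theorem 8, where your incidence double-counting argument appears) followed by the exponential Plotkin bound \eqref{eq:EBD_bound} obtained from Lemma \ref{le:spherical} with $T$ in place of $M$, and the same observation that $T e^{-n\vartheta(\rho,P,V)}\to\infty$ kills the additive $-1$ in the bracket. The only nuance is that restricting to a subsequence of $n$ where $nP(x)V(x'|x)$ is integral does not by itself control the $\limsup$ defining $d(R,P)$; the paper instead neglects the type constraints asymptotically (i.e., approximates $V$ by nearby conditional types for each $n$ and uses continuity of $I(P,V)$ and $\vartheta(\rho,P,V)$), which is the cleaner way to close the technicality you flag yourself.
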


\begin{remark}
We observe that with the choice $V(x'|x)=P(x')$ we have $PV=P$, $I(P,V)=0$ and $\vartheta(\rho,P,V)=\vartheta(\rho,P)$. Hence, if $R>\vartheta(\rho,P)$ for a given $\rho$, the particular choice $V(x'|x)=P(x')$ gives the same bound of Theorem \ref{th:const_comp_1}, which is thus included as a particular case in Theorem \ref{th:const_comp_2}.
\end{remark}

\section{An Analysis of the Bound}

\subsection{Binary Channels}
Since most readers are probably familiar with the original Elias bound, we first give evidence that the proposed bound is a generalization by showing in detail how the original bound for binary channels is recovered as a special case. This shows that, even in the binary case, there is no loss in the use of equation \eqref{eq:exp_plotkin} with the approach based on $\vartheta$ with respect to the standard use of the Plotkin bound \eqref{eq:plotkin} under composition constraints.
In particular, the original bound for binary channels is obtained in the limit $\rho\to\infty$.

For a binary channel, let $Z=-\log\psi_0^\dagger \psi_1$ be the Battacharyya distance between the two inputs ($0$ and $1$). Then, for any $\rho$
it is not difficult to see that one can always take as an optimal representation of degree $\rho$ the two-dimensional vectors 
\begin{align*}
\tilde{\psi}_0& =[\cos(\alpha), \sin(\alpha)]^\dagger\\
\tilde{\psi}_1& =[\cos(\alpha), -\sin(\alpha)]^\dagger
\end{align*}
where $\alpha$ satisfies $\cos(2\alpha)=e^{-Z/\rho}$. For a given distribution $Q$, let the optimal handle which achieves $\vartheta(\rho,Q)$ be
\begin{align*}
f& =[\cos(\beta), \sin(\beta)]^\dagger.
\end{align*}
Then 
\begin{equation}
\vartheta(\rho,Q)= - 2Q(0)\log \cos(\alpha-\beta)- 2Q(1)\log \cos(\alpha+\beta).
\label{eq:binarytheta}
\end{equation}
where the value of $\beta$ can be determined by minimizing this expression. Upon differentiation and a little of algebra  we find
\begin{equation}
\sin(2\beta)=(Q(0)-Q(1))\sin(2\alpha).
\label{eq:alphabeta}
\end{equation}
The value of $\vartheta(\rho,Q)$ can now be computed analytically by using this relation in \eqref{eq:binarytheta}. The resulting expression is complicated and not 
very useful here. So, we only study the bound of Theorem \ref{th:const_comp_2} asymptotically obtained by letting $\rho\to\infty$. We also only study the bound obtained for the uniform composition $P$, since we already know that this is the interesting case for the original Elias bound.

First note that, for any $V$, $\vartheta(\rho,P,V)\to 0$ as $\rho\to\infty$, which means that we can obtain a bound for any $R$ by choosing $V$ such that $I(P,V)<R$.
Let us then choose $V$ such that $V(1|0)=V(0|1)=\lambda$, with $\lambda$ such that  $I(P,V)=1-h(\lambda)<R$, where $h(\cdot)$ is the binary entropy function.
If we set $Q=V(\cdot|0)$, then by symmetry we have $\vartheta(\rho,P,V)=\vartheta(\rho,Q)$.
In the limit $\rho\to\infty$, since  $\cos(2\alpha)=e^{-Z/\rho}$, we have $\alpha\to 0$, and from equation \eqref{eq:alphabeta} we deduce that
$\beta\approx \alpha(1-2\lambda)$. The expression for $\vartheta(\rho,Q)$ is then asymptotically
\begin{align*}
\vartheta(\rho,Q) & \approx  - 2(1-\lambda)\log \cos(2\lambda\alpha)- 2\lambda\log \cos(2(1-\lambda)\alpha)\\
& \approx (1-\lambda)(4\lambda^2\alpha^2)+\lambda(4(1-\lambda)^2\alpha^2)\\
& = 4\lambda(1-\lambda)\alpha^2.
\end{align*}
Using again the relation $e^{-Z/\rho}=\cos(2\alpha)$ we deduce that
\begin{align}
\rho & = \frac{-Z}{\log\cos(2\alpha)}\\
&\approx  \frac{Z}{2\alpha^2}.
\end{align}
So, $\rho\vartheta(\rho,Q)\approx 2\lambda(1-\lambda)Z$. The bound of Theorem \ref{th:const_comp_2} states that for $R>\vartheta(\rho,P,V)+I(P,V)$ we have $d(R,P)\leq \rho\vartheta(\rho,P,V)$. Since here $\vartheta(\rho,P,V)=\vartheta(\rho,Q)\to 0$ as $\rho\to\infty$, in this limit the theorem says that if $R>1-h(\lambda)$ 
then $d(R)\leq 2\lambda(1-\lambda) Z$. This is an equivalent formulation of the Elias bound. One may wonder whether for finite $\rho$ a better bound can be obtained. Unfortunately, a rigorous analysis seems to be painful, but numerical evaluation shows that this is not the case, the optimal bound is  achieved as  $\rho\to\infty$.

\subsection{Non-Negative Definite Channels and Euclidean Space Codes}

The detailed analysis of the bound obtained for the BSC as $\rho\to\infty$ can be extended to all non-negative definite channels without a zero-error capacity. In this case, the bound obtained as $\rho\to\infty$ is precisely the same as that of Blahut. Due to space limitation, we can only give a sketch of the proof. For a fixed value of $x$, consider the quantity $\vartheta(\rho,V(\cdot|x))$ which appears in the definition \eqref{eq:defthetarhoPV}. Let for ease of notation $Q=V(\cdot|x)$, so that we can focus on the evaluation of  $\vartheta(\rho,Q)$ for a general $Q$ and get rid of $x$.
As mentioned in \cite{dalai-ISIT-2013b}, for these channels, for any $\rho\geq 1$, representations of degree $\rho$ exist which meet the constraints $\braket{\tilde{\psi}_{x_1}}{\tilde{\psi}_{x_2}}\leq \braket{{\psi}_{x_1}}{{\psi}_{x2}}^{1/\rho}$ with equality. All these vectors  tend to concentrate in a small cap on the unit sphere as $\rho\to\infty$, and $\vartheta(\rho,Q)\to 0$. Using the asymptotic expansions $\sin(2t)\approx 2t$ and $\log(\cos^2(t))\approx -t^2$, valid for $t\to 0$, one finds that the optimal choice of the handle $f$ for achieving $\vartheta(\rho,Q)$ is asymptotically the center of mass of the points $\tilde{\psi}_{x}$ (if vectors are interpreted as points and $Q$ as a mass distribution). 
Then one comes to the conclusion that, as $\rho\to\infty$, 
\begin{equation}
\rho\vartheta(\rho,Q)\to -\sum_{x_1,x_2}Q(x_1)Q(x_2)\log \braket{{\psi}_{x_1}}{{\psi}_{x_2}}.
\end{equation}
Furthermore, since $\vartheta(\rho,Q)\to 0$, the constraint on the rate becomes $R>I(P,V)$, and the bound on the distance takes the same form as Blahut's one. So, our bound is actually a generalization of Blahut's to general channels possibly with a zero-error capacity.

Finally, we point out that the bound derived by Piret for the squared euclidean distance of codes on the unit circle can also be obtained as a particular case of our bound. This is however due to a rather interesting independent fact, namely that for any choice of points in a euclidean space there exists a set of unit norm vectors in some other space whose pairwise Bhattachryya distances are precisely the squared euclidean distances between the original points. These vectors trivially satisfy the properties required for non-negative definite channels and thus Blahut's bound applies. In fact, Piret's bound can be recast as a special case of Blahut's one and it is thus also included in our method.

\subsection{Complexity}
The proposed bound has a non-trivial complexity and the evaluation of the optimal choice of $\rho$ and $V$ for a given channel, given $P$ and $R$ does not seem to be simple. It must be stressed, however, that for any choice of $\rho$ and $V$ we obtain a bound for the rate $R=\vartheta(\rho,P,V)+I(P,V)$. There are two main factors that should be analyzed for a deeper understanding of whether such a high complexity is reasonable or not for these kind of bounds.
One reason is that we have no closed form expression for the function $\vartheta(\rho,P,V)$, and this prevents any particularly interesting analytic study of the resulting bound. This is due to the fact that we want to cope with channels with a zero-error capacity and that we chose to build upon the work of Lov\'asz, since it is the most effective in this sense, which also does not lead to closed form expressions for bounds to $C_0$. 

Another source of complexity, instead, seems to be intrinsic in all attempts to generalize the Elias bound. We should spend a few words on this. In \cite{berlekamp-book-1984}, Berlekamp uses a slightly different approach when compared to \cite{blahut-1977} and \cite{piret-1986} and, since he considers only the case of the Hamming and Lee metrics, he uses a symmetry argument to derive a bound which has a simpler form. Still, it can be checked that the bound does not give a closed form relation between $R$ and the minimum distance, since it involves the inversion of \cite[eq. (13.15)]{berlekamp-book-1984}, which requires a non trivial computation. Furthermore, in the general case where there is no symmetry in the distances, this method cannot be applied.

The bound proposed by Piret in \cite{piret-1986}, as well, is only valid for a certain symmetric setting but, as mentioned before, it is much related to Blahut's. Using the symmetry, he obtains a bound that can be expressed in terms of one distribution (his $\beta$, in place of our stochastic matrix $V$), but even there, there is no closed form expression for the optimal distribution to choose for a given rate $R$, although they conjecture what it could be (see discussion after  \cite[eq. (41)]{piret-1986}).

Finally, the only approach which works for DMCs is the one proposed by Blahut. He gives a complete description of his bound $E_U(R)$ in a form (see his Definition 3) which looks very similar to classical bounds to the reliability function\footnote{We focus on his function $E_U(R)$ only as an upper bound on the minimum Bhattacharyya distance in this paper. The use of $E_U(R)$ as an upper bound to $E(R)$ seems to be valid only for pairwise reversible channels.} $E(R)$. Unfortunately, however, the computation of this function is in our opinion much more difficult than expected. The problem is that there seems to be a sign error in the proof of Lemma 5 which leads to erroneously consider his function $F(P)$ convex while it is actually concave. So, in our opinion, Theorem 7 is not  valid, and  in the definition of $E_U(R)$ we have a \emph{minimization} of a concave function over a convex set. Thus, the evaluation of $E_U(R)$ it much more difficult than expected.

We close by pointing out that this concavity issue of Blahut's $F(P)$ mentioned above is essentially the same reason which prevents a closed form expression for Piret's bound. It is actually even present in the original case of binary channels, although with a trivial solution; it is the point where, for a give $R$, we need to find infimum value of the concave function $2\lambda(1-\lambda)$ under the constraint that $1-h(\lambda)<R$. On the other hand, this concavity is also a key property which is needed in the original Elias bound and in all above mentioned extensions, see \cite[after (2.46)]{shannon-gallager-berlekamp-1967-2}, \cite[after (13.48)]{berlekamp-book-1984}, \cite[Lemma 6]{blahut-1977} and \cite[Lemma 4.2]{piret-1986}. Hence, we doubt that a simpler solution could ever been found along these lines. It may just be more effective to find an empirically good selection of $V$ as suggested by Piret for his $\beta$.

%




\end{document}